\documentclass[aps,pra,reprint,floatfix,amsmath,amssymb,longbibliography]{revtex4-2}
\usepackage[T1]{fontenc}
\usepackage{lmodern,amsthm,bm}
\usepackage{graphicx,booktabs,array,tabularx}
\usepackage{xcolor,microtype}
\usepackage{hyperref}
 \usepackage{placeins}
\graphicspath{{figures/}}
\newtheorem{proposition}{Proposition}

\definecolor{RevisionColor}{RGB}{170,0,0}
\newif\ifshowchanges
\showchangesfalse

\newcommand{\J}{\mathcal J}

\newcommand{\GraphCount}{21,766}
\newcommand{\PairMatches}{152}
\newcommand{\PairTotal}{270}
\newcommand{\PairPct}{56.3}
\newcommand{\RightPairQ}{1.000}
\newcommand{\WrongPairQ}{0.080}
\newcommand{\WrongPairOracleQ}{1.000}
\newcommand{\WrongPairGain}{0.735}

\begin{document}
\title{Gap Sensitivity and Bottleneck Migration in Sparse Constraint-Preserving Quantum Annealing}
\author{Aruna Gupta}
\email{arunagupta169@gmail.com}
\author{S. R. Hassan}
\email{shassan@imsc.res.in}
\affiliation{Department of Theoretical Physics, The Institute of Mathematical Sciences,
A CI of Homi Bhabha National Institute, CIT Campus, Taramani,
Chennai 600113, Tamil Nadu, India}
\begin{abstract}
Constraint-preserving exchange drivers allow many coupling graphs that preserve the same feasible sector, but exhibit different minimum gaps along an annealing path. We derive an exchange-resolved sensitivity of the normalised minimum gap for fixed-cardinality Laplacian drivers sharing a Dicke ground state, and apply it to identify locally favourable modifications of the driver. Numerical experiments indicate this sensitivity reliably predicts beneficial infinitesimal changes, but its reliability decreases when the same direction is extended to a finite edge replacement. The gap landscape tracking reveals one mechanism of this loss of reliability. Improving the active bottleneck can expose a competing minimum, with a more extensive reorganisation of the low-energy subspace. We call this the ``repair-and-expose'' mechanism
and observe it in both dense quadratic and sparse frustrated objective
families. This motivates relinearizing the response when competing
minima become co-active. In the tested oracle setting, a tied-minimum
max--min update is substantially more reliable than continuing the
original direction or optimizing only the incoming minimum. Supporting
finite-size search and selector tests show that this local information
can remain useful beyond the infinitesimal limit, while its practical
extraction depends on the driver geometry and identification of the
relevant low-energy competitors. These results connect local
exchange-resolved gap sensitivity to bottleneck migration and
conditional driver redirection.
\end{abstract}
\maketitle
\section{Introduction}
\label{sec:intro}

Quantum annealing encodes an objective in a problem Hamiltonian and
interpolates from a driver whose ground state can be initialized
\cite{farhi2000,albash2018}. The minimum gap between the relevant
low-energy eigenstates is an important property of this path, although
it does not by itself determine finite-time success. Here we consider
the narrower problem of designing a constraint-preserving driver to
improve the minimum gap of a specified annealing path.

Hard constraints can be enforced dynamically by choosing a driver that
commutes with the corresponding constraint operator
\cite{hen2016a,hen2016b,leipold2022}. For fixed Hamming weight,
two-body exchange interactions move excitations while preserving their
total number. Constraint preservation, however, does not determine
which pairs should be coupled. Many connected exchange graphs preserve
the same feasible sector and share the same initial ground state, while
their different kinetic connections can produce different minimum
gaps. Ring and complete-graph $XY$ mixers provide established reference
constructions \cite{cook2020,wang2020}, while sparse native coupler
structures motivate treating the number of exchanges as a resource
\cite{dwavetopology}. This leads to our central question:
how does changing a physical exchange affect the actual
minimum-gap bottleneck of the annealing path?

Constraint-preserving mixers
\cite{hen2016a,hen2016b,leipold2022,hadfield2019,wang2020},
minimum-gap path optimization \cite{zeng2016}, Fiedler-based graph
design \cite{ghosh2006}, and driver-based anticrossing mitigation
\cite{dickson2012,lanting2017} provide the relevant background.
Modified drivers and catalysts can also create additional gap minima and reorganise low-energy states \cite{choi2021,feinstein2024}.
Our focus is distinct: we decompose the normalised minimum-gap response into different physical exchange channels and ask how this local response changes when the driver is changed by a finite amount.

For Laplacian exchange drivers the local response has a direct physical meaning. Every physical exchange induces a family of swaps between feasible configurations, and its Dirichlet form quantifies how the wavefunction varies over those swaps. Whether that exchange locally raises the active minimum gap is determined by the difference of this quantity in the first excited and ground states, along with the driver-normalization correction. This gives a
directional principle for modifying the driver, but only locally:
a finite change can relieve the original bottleneck until another
minimum becomes limiting.

This distinction between \emph{local repair} and \emph{bottleneck
migration} is the central narrative of this work. Using quantum-state
oracle information, we first verify that the exchange-resolved
sensitivity predicts beneficial infinitesimal changes. We then follow
the same directions toward finite edge replacements and track the
coexisting gap minima and low-energy subspaces. The calculations show
that loss of finite-step reliability is associated with migration of
the active bottleneck and stronger low-energy reorganization. At
resolved bottleneck exchanges, recomputing the local direction while
protecting all active minima provides a substantially more reliable
local update. The same qualitative repair-and-expose mechanism is
observed in an independent sparse frustrated objective family.

Finally, we separate this mechanism study from the question of practical
driver selection. Exhaustive references and evaluation-cap-matched controls
test the usefulness of oracle-informed ranking for finite discrete search,
while spectral and problem-informed selectors assess approximations that do
not use the current quantum eigenstates to choose the practical proposal.
The retrospective oracle-pair construction is retained only as a diagnostic
reference. These calculations are supporting finite-size diagnostics rather
than claims of scalable optimization, finite-time speedup, or quantum
advantage.

\section{Model and mathematical framework}
\label{sec:model}

We consider binary optimization problems with a fixed-cardinality
constraint,
\begin{equation}
x_i\in\{0,1\},
\qquad
\sum_{i=1}^{N}x_i=K,
\end{equation}
and use as the primary test family the dense quadratic objective
\begin{equation}
C(x)=q\,x^T W x-\mu^T x,
\qquad
W\succeq0,
\qquad
q>0.
\label{eq:cost}
\end{equation}
This mean--variance-inspired form provides a controlled family of
constrained quadratic landscapes rather than a financial application
study \cite{markowitz1952,venturelli2019,grant2021}. The constraint
restricts the dynamics to
\begin{equation}
\mathcal H_K
=
\operatorname{span}
\left\{
|x\rangle:\sum_i x_i=K
\right\},
\qquad
D_K=\binom{N}{K},
\label{eq:sector}
\end{equation}
and all spectra and gaps considered below are evaluated within this
sector.
For numerical comparison we use the dimensionless problem energy
\begin{equation}
\begin{gathered}
h(x)
=
\frac{C(x)-C_{\min}}{R_C},
\\
R_C=C_{\max}-C_{\min}>0,
\qquad
P=\operatorname{diag}h,
\end{gathered}
\label{eq:pnorm}
\end{equation}
where the extrema are taken over the feasible sector and are held fixed
while the driver is varied. Thus changes of the annealing gap reflect
changes of the driver rather than changes of the problem-energy scale.
The exact feasible extrema are available in the small systems studied
here and are used only as a benchmark convention.

Constraint-preserving exchange drivers.
Let $E_{\rm allow}$ denote the set of allowed physical pairs and
$e=\{i,j\}$ one such pair. We associate with it the exchange Laplacian
\begin{equation}
L_e
=
\frac12
\left(
I-Z_iZ_j-X_iX_j-Y_iY_j
\right)
=
I-\operatorname{SWAP}_{ij},
\label{eq:Le}
\end{equation}
and construct
\begin{equation}
A(w)=\sum_{e\in E_{\rm allow}}w_e L_e,
\qquad
w_e\geq0.
\label{eq:driver}
\end{equation}
On the subspace spanned by $|01\rangle$ and $|10\rangle$, $L_e$ has
the Laplacian form
\[
\begin{pmatrix}
1&-1\\
-1&1
\end{pmatrix},
\]
while it vanishes on $|00\rangle$ and $|11\rangle$. Hence $A(w)$
preserves total excitation number and acts within $\mathcal H_K$.

A physical edge has a many-body meaning: it generates every feasible
configuration swap in which an excitation is transferred across that
pair. Consequently, within $\mathcal H_K$, $A(w)$ is the weighted
Laplacian of the induced feasible-state swap graph. If the physical
support is connected and $1\leq K\leq N-1$, this swap graph is
connected and has the unique ground state
\begin{equation}
|D_K^N\rangle
=
D_K^{-1/2}
\sum_{x\in\mathcal F_{N,K}}|x\rangle .
\label{eq:dicke}
\end{equation}
All connected candidate drivers therefore start from the same feasible
Dicke state. This common initial state is important: changing the
exchange graph modifies the kinetic structure of the annealing path
without changing its initial ground state.

Figure~\ref{fig:graphs} illustrates the distinction between the
physical exchange graph and the much larger configuration-space graph
that it induces.

\begin{figure*}[t]
\centering
\includegraphics[width=\linewidth]
{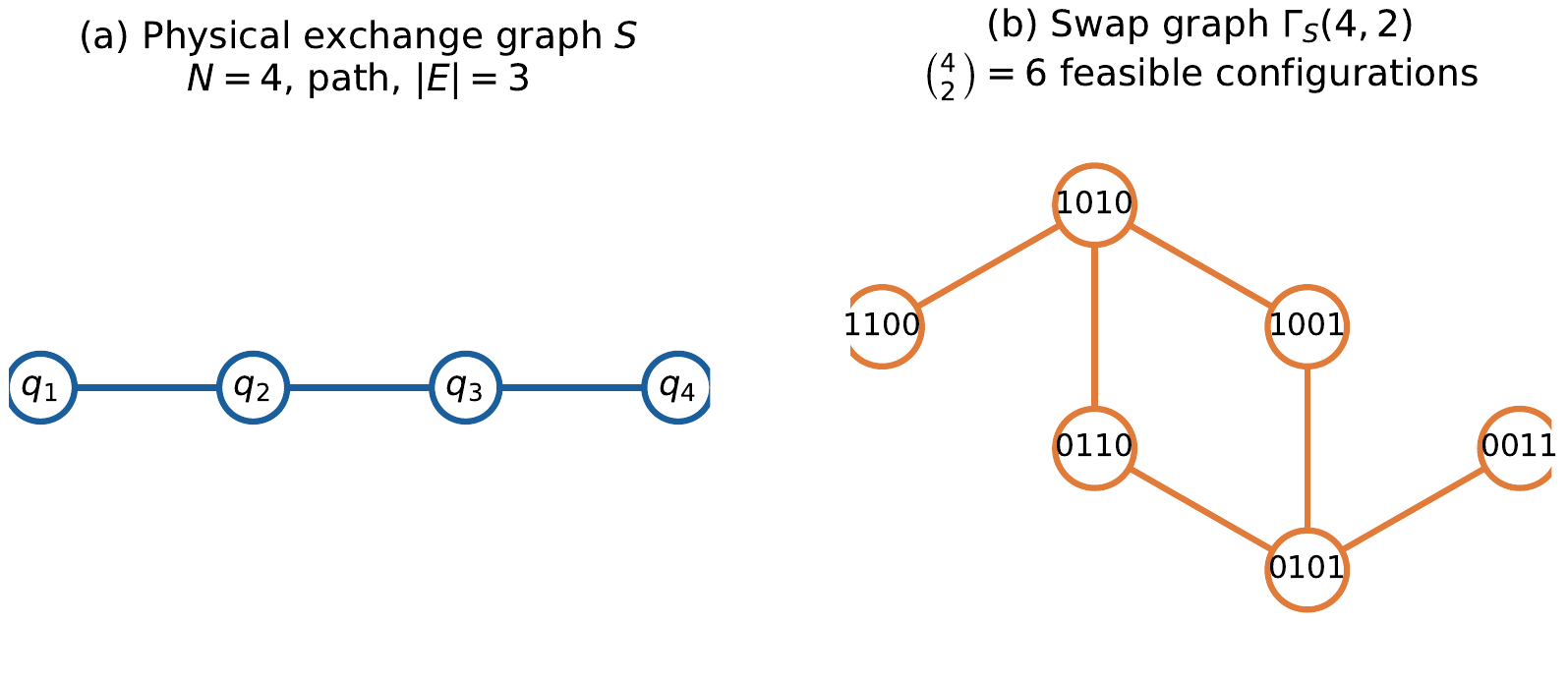}
\caption{
A physical exchange graph and its induced fixed-cardinality swap
graph. Physical vertices represent binary variables, whereas
configuration-space vertices represent feasible strings. A single
physical edge generates all allowed configuration swaps across that
variable pair.
}
\label{fig:graphs}
\end{figure*}

The discrete driver-design class contains connected unit-weight
supports with a fixed edge budget,
\begin{equation}
\begin{gathered}
\mathcal S_B
=
\left\{
S\subseteq E_{\rm allow}:
|S|=B,\;
(V,S)\ {\rm connected}
\right\},
\\
w_e=\mathbf 1_{e\in S}.
\end{gathered}
\label{eq:support_class}
\end{equation}
Different labelled supports are retained because the problem
Hamiltonian is fixed. For the numerical benchmarks, \(E_{\rm allow}\) contains all unordered
physical pairs. Thus the resource constraint fixes the number \(B\) of
supported exchanges but does not impose a particular hardware
connectivity mask.

To compare different supports on the same driver-energy scale, we
normalize by the sector bandwidth,
\begin{equation}
D(w)=\frac{A(w)}{\Lambda(w)},
\qquad
\Lambda(w)=\lambda_{\max}[A(w)],
\label{eq:normalized_driver}
\end{equation}
and define the annealing path
\begin{equation}
H(s,w)
=
(1-s)D(w)+sP,
\qquad
0\leq s\leq1.
\label{eq:annealing_path}
\end{equation}
The quantity to be optimized is the minimum spectral gap,
\begin{equation}
\begin{gathered}
\mathcal J(w)
=
\min_{s\in[0,1]}
\Delta(s,w),
\\
\Delta(s,w)=E_1(s,w)-E_0(s,w),
\end{gathered}
\label{eq:J}
\end{equation}
with discrete fixed-budget reference
\begin{equation}
\mathcal J_B^\star(P)
=
\max_{S\in\mathcal S_B}
\mathcal J(\mathbf 1_S).
\label{eq:Jstar}
\end{equation}
Because a uniform rescaling of all $w_e$ leaves $D(w)$ unchanged, the
normalization separates the overall driver scale from the question of
how its exchange channels are distributed.

\section{Local response of the minimum gap}
\label{sec:sensitivity}

The central question is now local: given a driver $w$, which exchange
modification most strongly raises its current minimum gap?

Let $s^\star$ denote an active minimum of $\Delta(s,w)$ and define,
for real normalized low-energy eigenstates,
\begin{equation}
G_e(s)
=
\langle\psi_1(s)|L_e|\psi_1(s)\rangle
-
\langle\psi_0(s)|L_e|\psi_0(s)\rangle .
\label{eq:G}
\end{equation}
Since $L_e$ is itself a graph Laplacian,
\begin{equation}
\langle\psi|L_e|\psi\rangle
=
\sum_{\substack{\{x,y\}\\x\leftrightarrow_e y}}
|\psi(x)-\psi(y)|^2,
\label{eq:dirichlet}
\end{equation}
where each unordered configuration pair is counted once. Thus $G_e$
is a Dirichlet-energy contrast between the first excited and ground
states on the set of swaps generated by physical edge $e$. It measures
how strongly that exchange channel distinguishes the two states
forming the current low-energy bottleneck.

For a regular bottleneck this interpretation can be made exact.

\begin{proposition}[Normalized minimum-gap sensitivity]
Suppose that $\Lambda(w)$ is simple, the minimum gap has a unique
minimizer $s^\star$, and the two relevant eigenvalues are simple in a
neighborhood of $(s^\star,w)$. Then
\begin{equation}
\frac{\partial\mathcal J}{\partial w_e}
=
\frac{1-s^\star}{\Lambda}
\left[
G_e(s^\star)
-
\ell_e\,\delta_D(s^\star)
\right],
\label{eq:normgrad}
\end{equation}
where
\begin{equation}
\ell_e
=
\langle\phi_\Lambda|L_e|\phi_\Lambda\rangle,
\qquad
\delta_D
=
\langle\psi_1|D|\psi_1\rangle
-
\langle\psi_0|D|\psi_0\rangle .
\end{equation}
Here \(|\phi_\Lambda\rangle\) is the normalized eigenvector of
\(A(w)\) associated with its largest eigenvalue \(\Lambda(w)\).
\end{proposition}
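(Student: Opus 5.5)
The argument combines a Danskin/envelope reduction for the minimum over $s$ with Hellmann--Feynman derivatives of the two low-lying eigenvalues and of the bandwidth $\Lambda$.

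\emph{Envelope step.} Since $\mathcal J(w)=\min_{s\in[0,1]}\Delta(s,w)$ and the minimizer $s^\star$ is unique, Danskin's theorem applies provided $\Delta$ is $C^1$ in $w$ near $(s^\star,w)$. That regularity holds because $E_0$ and $E_1$ are simple there, so they are analytic in $(s,w)$. Danskin then gives
\[
\partial\mathcal J/\partial w_e=\partial_{w_e}\Delta(s^\star,w).
\]
This covers both an interior minimizer, where $\partial_s\Delta=0$, and an endpoint minimizer. No derivative of $s^\star$ with respect to $w$ is needed, since for a unique minimizer of a continuous family on a compact interval the first-order term through $s^\star$ drops out.

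\emph{Hellmann--Feynman step.} For simple eigenvalues with real normalized eigenvectors,
\[
\partial_{w_e}E_k=\langle\psi_k|\partial_{w_e}H|\psi_k\rangle,
\qquad
\partial_{w_e}H=(1-s)\,\partial_{w_e}D .
\]
Writing $D=A/\Lambda$, the quotient rule gives
\[
\partial_{w_e}D=\frac{L_e}{\Lambda}-\frac{A}{\Lambda^2}\,\partial_{w_e}\Lambda .
\]
Because $\Lambda$ is simple, Hellmann--Feynman applies again and yields $\partial_{w_e}\Lambda=\langle\phi_\Lambda|L_e|\phi_\Lambda\rangle=\ell_e$. Substituting $A/\Lambda=D$ gives
\[
\partial_{w_e}D=\frac{1}{\Lambda}\bigl(L_e-\ell_e D\bigr).
\]
Taking the difference of expectation values in $\psi_1$ and $\psi_0$ at $s^\star$ produces $(1-s^\star)\Lambda^{-1}\bigl[G_e-\ell_e\delta_D\bigr]$, which is \eqref{eq:normgrad}.

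\emph{Main obstacle.} The delicate point is the rigorous justification of the envelope step, that is, differentiability of $\mathcal J$ at $w$ from uniqueness of $s^\star$ together with $C^1$ regularity of $\Delta$ on a neighborhood of $[0,1]\times\{w\}$.

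The plan for this is as follows. Near $s^\star$, simplicity gives $C^1$ regularity of $\Delta$. Away from $s^\star$, I would use continuity and compactness to show that $\min_{|s-s^\star|\ge\epsilon}\Delta$ stays strictly above $\mathcal J$ under small perturbations of $w$. The minimum is therefore attained near $s^\star$, and the standard Danskin bound (an upper bound from fixing $s=s^\star$ and a lower bound from a minimizer $s_w\to s^\star$) gives the derivative.

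A related caveat concerns the sector restriction. $A(w)$ restricted to $\mathcal H_K$ may in principle have non-analytic eigenvalue crossings away from $s^\star$. These do not affect the argument, because only the local smoothness near $s^\star$ and continuity of $\Delta$ elsewhere are used, and eigenvalues are always continuous.
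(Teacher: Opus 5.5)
Your proposal is correct and follows the paper's proof. Hellmann--Feynman for the simple bandwidth gives $\partial_e\Lambda=\ell_e$ and $\partial_e D=(L_e-\ell_e D)/\Lambda$, Hellmann--Feynman for $E_1-E_0$ at fixed $s$ gives the gap response, and Danskin's envelope theorem evaluates it at the unique minimizer $s^\star$. Your localization step (minimizers $s_w\to s^\star$ by compactness, so $C^1$ regularity is only needed near $(s^\star,w)$) spells out a justification that the paper leaves to its citation of Danskin.
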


\begin{proof}
Hellmann--Feynman gives
$\partial_e\Lambda=\ell_e$ and therefore
\[
\partial_eD
=
\frac{L_e-\ell_eD}{\Lambda}.
\]
Applying the same theorem to $E_1-E_0$ at fixed $s$ gives the
directional gap response. The envelope theorem then evaluates this
response at the unique active minimizer $s^\star$ without requiring
the derivative of $s^\star$ itself \cite{danskin1967}.
\end{proof}

The prefactor \(1-s^\star\) also has an immediate consequence.
The instantaneous gap at \(s=1\) is independent of the driver,
so \(\Delta'_w(1,w;d)=0\). When the endpoint is the unique active
minimum, the regular minimum-gap derivative therefore vanishes;
if other minima are co-active, the directional response follows
the lower-envelope rule in Eq.~\eqref{eq:active}. In either case, \(\Delta(1)\)
provides an endpoint ceiling on \(\mathcal{J}\), so attaining this
ceiling naturally leads to abstention in the local driver-update
rule.

The same construction extends to degenerate driver bandwidths. If
$U_\Lambda$ spans the top eigenspace of $A$ and
\begin{equation}
\delta A=\sum_e d_eL_e,
\end{equation}
then
\begin{align}
\dot\Lambda[\delta A]
&=
\lambda_{\max}
\left(
U_\Lambda^\dagger\delta A U_\Lambda
\right),
\label{eq:dlambda}\\
\dot D[\delta A]
&=
\frac{\delta A}{\Lambda}
-
\frac{A}{\Lambda^2}
\dot\Lambda[\delta A].
\label{eq:dD}
\end{align}
For simple low-energy levels the fixed-$s$ gap response is
\begin{equation}
\Delta'_w(s,w;d)
=
(1-s)
\left[
\langle\psi_1|\dot D|\psi_1\rangle
-
\langle\psi_0|\dot D|\psi_0\rangle
\right].
\label{eq:dirgap}
\end{equation}
If the ground level is simple but the first-excited level is
degenerate, let \(U_1\) span the corresponding first-excited
eigenspace and define \(V=(1-s)\dot D\). The ordered
first-excited-level response is then obtained by diagonalizing
the perturbation within this eigenspace,
\[
E_1'=\lambda_{\min}\!\left(U_1^\dagger V U_1\right),
\qquad
E_0'=\langle\psi_0|V|\psi_0\rangle,
\]
so that \(\Delta'_w=E_1'-E_0'\). An unresolved ground-state
degeneracy lies outside the regular sensitivity assumptions used here.
This form is particularly important for a discrete edge replacement.
If supported edge $r$ is removed and absent edge $a$ is added, the
complete proposal direction is
\begin{equation}
\delta A=L_a-L_r.
\label{eq:replacement_direction}
\end{equation}
At a degenerate bandwidth the directional derivative of $\Lambda$ is
not a linear functional of $d$. The replacement must therefore be
scored as one complete direction; independently scoring the addition
and removal and subtracting the two numbers is not generally
equivalent.

The derivative is nevertheless a \emph{local} statement. A complete
replacement is a finite move. To distinguish these two notions, we
parameterize the direction selected at a starting support $S_0$ as
\begin{equation}
A(\alpha)
=
A(S_0)
+
\alpha(L_a-L_r),
\qquad
0\leq\alpha\leq1.
\label{eq:alpha_path_framework}
\end{equation}
For $\alpha\ll1$ this probes the differential prediction, whereas
$\alpha=1$ produces the complete unit-edge replacement. Intermediate
values are weighted diagnostic drivers and need not belong to the
discrete class $\mathcal S_B$. This distinction is central to the
numerical analysis below.

\section{Bottleneck migration and relinearization}
\label{sec:bottleneck_framework}

The minimum-gap derivative is evaluated at the bottleneck that is
active \emph{at the current driver}. As the driver changes, however,
the identity and location of that bottleneck can also change. This
provides the basic mechanism by which a locally favorable direction
can cease to be favorable at finite distance.

Suppose several smooth local minima of $\Delta(s,w)$ coexist. If
$\mathcal A(w)$ denotes the set of active minima, then under the
appropriate smooth-envelope conditions,
\begin{equation}
\mathcal J'(w;d)
=
\min_{s\in\mathcal A(w)}
\Delta'_w(s,w;d).
\label{eq:active}
\end{equation}
Thus a direction that improves one minimum need not improve another.
In particular, raising the currently limiting minimum can expose a
previously higher minimum, which then becomes the new global
bottleneck. We refer to this exchange of the active minimum as
\emph{bottleneck migration}.

This observation also suggests the appropriate local response when two
minima become co-active. Let
\[
\mathcal{A}_c
=
\{b:\Delta(s_b,w_c)=\mathcal{J}(w_c)\}
\]
denote the set of active branch indices at the resolved exchange, where
\(s_b\) is the annealing coordinate of branch \(b\). Define
\[
g_b(d)=\Delta'_w(s_b,w_c;d).
\]
Rather than continuing the direction chosen for the old bottleneck, we
can relinearize at the current driver and choose
\begin{equation}
d_{\rm act}
\in
\operatorname*{arg\,max}_{d\in\mathcal D_c\cup\{0\}}
\min_{b\in\mathcal A_c}g_b(d).
\label{eq:adaptive_active_menu}
\end{equation}
Here
\[
\mathcal{D}_c
=
\{\mathbf{e}_a-\mathbf{e}_r:
a,r\in\operatorname{supp}(w_c),\;a\neq r\}
\]
is the set of ordered unit weight-transfer directions between currently
supported exchanges. Here $\mathbf e_e$ denotes the standard basis vector associated with physical edge $e$ in coupling-weight space. The zero direction allows abstention. This max--min rule protects all
currently active minima to first order: among the available directions,
it chooses the one with the largest worst response across the tied
bottlenecks.

Eq.~\eqref{eq:adaptive_active_menu} is not a claim of global
optimization. It is the natural local continuation of
Eq.~\eqref{eq:active}: once the active bottleneck changes, the
linearization should also be updated. The numerical relinearization
experiment below tests precisely this proposition on the resolved
bottleneck exchanges.

Low-energy reorganization accompanying such migration is quantified
through the rank-two projector
\begin{equation}
\Pi
=
|\psi_0\rangle\langle\psi_0|
+
|\psi_1\rangle\langle\psi_1|,
\end{equation}
and the normalized subspace overlap
\begin{equation}
O_{\rm sub}
=
\frac12
\operatorname{Tr}
\left(
\Pi_{\rm before}\Pi_{\rm after}
\right).
\label{eq:subspace_overlap_framework}
\end{equation}
An overlap close to one indicates that essentially the same
two-dimensional low-energy sector remains relevant, whereas a smaller
value signals stronger reorganization. This provides a state-space
diagnostic complementary to tracking the position of the gap minimum
itself. The rank-two overlap is interpreted only when the two lowest
levels form an isolated two-dimensional spectral subspace. If the
first-excited level is degenerate with a higher level, the rank-two
projector is not uniquely defined and a higher-rank subspace
treatment is required.

\section{Spectral and problem-informed approximations}
\label{sec:limits}

The exact sensitivity requires the current quantum low-energy states.
We therefore consider two limiting sources of classical information
that can be used to approximate favorable exchange directions.

At the beginning of the anneal, the symmetric exclusion-process gap
identity following from the interchange-process result of Caputo,
Liggett, and Richthammer \cite{caputo2010} gives
\begin{equation}
\operatorname{gap}A(w)
=
\lambda_2[L_S(w)],
\qquad
1\leq K\leq N-1,
\label{eq:exclusion}
\end{equation}
where $L_S$ is the physical $N$-vertex graph Laplacian. Consequently,
\begin{equation}
\Delta(0,w)
=
C(S,w)
=
\frac{\lambda_2[L_S(w)]}{\Lambda(w)}.
\label{eq:certificate}
\end{equation}
This is an exact initial-gap descriptor, but only an upper bound on the
minimum gap,
\[
\mathcal J(w)\leq C(S,w).
\]
It therefore characterizes initial spectral connectivity rather than
the interior bottleneck itself.

Its normalized directional response is
\begin{equation}
C'(w;d)
=
\frac{\lambda_2'(w;d)}{\Lambda}
-
\frac{\lambda_2}{\Lambda^2}
\dot\Lambda[\delta A].
\label{eq:Cdir}
\end{equation}
For a simple Fiedler eigenvalue with normalized eigenvector $v$, an
individual unnormalized edge derivative is $(v_i-v_j)^2$. For a
degenerate Fiedler eigenspace $U_2$,
\begin{equation}
\lambda_2'(w;d)
=
\lambda_{\min}
\left(
U_2^\dagger\delta L_S U_2
\right).
\end{equation}
A positive rank-one edge addition cannot raise a degenerate
unnormalized algebraic connectivity to first order. This creates a
specific limitation of single-edge spectral ranking on the cycle and
star-plus-edge geometries considered below; the path-plus-chord
geometry, with a simple Fiedler eigenvalue, provides a nondegenerate
control.

At an interior avoided crossing, a different approximation becomes
natural. Suppose the two relevant states are dominated by two
configurations,
\begin{equation}
|\psi_0\rangle
\simeq
\cos\theta|a\rangle+\sin\theta|b\rangle,
\qquad
|\psi_1\rangle
\simeq
-\sin\theta|a\rangle+\cos\theta|b\rangle.
\label{eq:two}
\end{equation}
Define the diagonal exchange matrix elements
\[
d_a^{(e)}=\langle a|L_e|a\rangle,
\qquad
d_b^{(e)}=\langle b|L_e|b\rangle.
\]
An exchange directly connecting $a$ and $b$ then gives
\begin{equation}
G_e\simeq2\sin2\theta,
\end{equation}
whereas an exchange diagonal in this restricted basis gives
\begin{equation}
G_e
\simeq
\cos2\theta
\left(
d_b^{(e)}-d_a^{(e)}
\right).
\end{equation}

At equal mixing the direct exchange reaches $G_e=2$, while the
diagonal-only contribution vanishes within the two-state ansatz. This
suggests that exchanges coupling the competing low-energy regions can
be especially effective near a localized avoided crossing.

The argument is conditional: the relevant competitor need not be the
second-lowest classical configuration, equal mixing need not occur at
the minimum gap, and additional states may contribute. The spectral
and crossing pictures should therefore be viewed as two limits of the
same exchange sensitivity rather than as universally ordered stages of
an anneal.

\section{Classical approximations to the low-energy direction}
\label{sec:surrogates}

For the dense quadratic family we use two problem-informed
approximations to the crossing-relevant direction.

The first is a relax--swap construction. We solve the convex
relaxation
\begin{equation}
\min_z
qz^TWz-\mu^Tz,
\qquad
0\leq z_i\leq1,
\qquad
\sum_i z_i=K,
\label{eq:relax}
\end{equation}
and round the solution to a feasible $\widehat x$. For an occupied
$i$ and unoccupied $j$, let $\delta=e_j-e_i$. The exact change under
this classical swap is
\begin{equation}
\delta C_{ij}
=
q
\left(
2\delta^T W\widehat x+\delta^T W\delta
\right)
-
\mu^T\delta.
\label{eq:swapcost}
\end{equation}
The pair minimizing this change defines the singleton or relax--swap
proposal. This construction uses no exact combinatorial optimum, but
recovering a favorable classical swap is not equivalent to maximizing
the quantum gap sensitivity.

The second approximation replaces individual configurations by
classical basins. Deterministic steepest descent on the current swap
graph assigns each feasible configuration to a terminal state. Starting
from the rounded relaxation and 20 uniformly sampled feasible
configurations, the two lowest-cost distinct discovered terminals
define basins $A$ and $B$.

Let $u_A$ and $u_B$ be nonnegative normalized amplitudes supported on
these disjoint basins. Replacing the two configurations in
Eq.~\eqref{eq:two} by these basin states gives
\begin{equation}
G_e
\simeq
\cos2\theta\,(b_e-a_e)
-
2\sin2\theta\,t_e,
\label{eq:generalflux}
\end{equation}
where
\[
\begin{gathered}
a_e=\langle u_A|L_e|u_A\rangle,\quad b_e=\langle u_B|L_e|u_B\rangle,\\
t_e=\langle u_A|L_e|u_B\rangle .
\end{gathered}
\]
At equal mixing the relevant inter-basin contribution becomes
\begin{equation}
F_e(A,B)
=
-2t_e
=
2
\sum_{\substack{x\in A,\;y\in B\\x\leftrightarrow_e y}}
u_A(x)u_B(y).
\label{eq:flux}
\end{equation}
We refer to $F_e$ as the amplitude weighted inter-basin kinetic coupling. The historical label ``flux'' appearing in archived data and plot legends refers to this quantity, not to a quantum probability current.

The amplitudes used operationally are
\begin{equation}
\begin{gathered}
u_A(x)
=
\frac{
\mathbf1_A(x)
e^{-\gamma[h(x)-h_A]/2}
}{
\left(
\sum_{z\in A}
e^{-\gamma[h(z)-h_A]}
\right)^{1/2}
},
\\
h_A=\min_{x\in A}h(x),
\qquad
\gamma=8,
\end{gathered}
\label{eq:amplitude}
\end{equation}
and analogously for $B$. The choice is a model for the internal basin amplitudes, not a derived eigenstate.
This approximation leaves several potential sources of error:
identifying the pair of relevant basins, modelling the internal amplitudes, assuming the mixing to be close to equal, neglecting additional states, and substituting the total normalised sensitivity by its direct kinetic component. To isolate the first of these, we define a
retrospective oracle-pair reference from
\begin{equation}
p(x)
=
\frac{
|\psi_0(x)|^2+|\psi_1(x)|^2
}{2}
\end{equation}
at the computed minimum and select the two basins with the largest
summed $p$. This construction uses quantum information and is employed
only diagnostically.

\section{Numerical protocol}
\label{sec:protocol}

The numerical experiments are arranged to probe progressively higher levels of the framework. First the exact local sensitivity, then its finite-step breakdown via bottleneck migration, then relinearization at a resolved exchange and finally the performance of finite discrete search and classical approximations.

The main dense ensemble uses \begin{equation} W= \frac{ MM^T+\operatorname{diag}u }{ \max_i(MM^T+\operatorname{diag}u)_{ii} }, \end{equation} where $M$ has independent standard normal entries, $u_i\sim U(0.05,0.25)$, $\mu_i\sim U(0.1,1.0)$, and $q\sim U(0.4,1.0)$. Only objectives with a unique feasible optimum that is separated from the next cost by more than $10^{-8}$ are retained.
The matrix dimensions, random generators, labelled supports, starting-support schedule, gap locators and numerical tolerances used in these calculations are given in Appendix~\ref{app:repro}.

The exhaustive study uses 30 matched $N=6$, $K=3$ objectives for
budgets $B=5,\ldots,9$. The numbers of connected labelled supports are
1296, 3660, 5700, 6165, and 4945, respectively, giving
\GraphCount{} graphs and 652,980 graph--instance evaluations. Six
starting supports per objective and budget give
\[
30\times5\times6=900
\]
starting graph--instance records. The oracle-sensitivity and migration
experiments reuse these same starts.

The numerical reference $\widehat{\mathcal J}$ is obtained by scanning
the annealing coordinate and refining the smallest observed gap.
Graph coverage in the exhaustive study is combinatorially complete,
whereas minimization over the continuous annealing coordinate remains
numerical.

For the oracle test, the current minimum is refined before any finite
candidate gap is evaluated. Every legal connected one-edge replacement
is scored from the normalized directional response. For the fixed oracle
paths in Figures~\ref{fig:oracle_steps}--\ref{fig:subspaces} and the
corresponding migration and relinearization calculations, scores within an absolute
tolerance of $10^{-12}$ of the largest score are treated as numerically
tied. Among these candidates, the replacement with the smallest
support mask is selected. Physical edges are indexed from zero in
lexicographic order, and the mask of a proposed support $S$ is
$m(S)=\sum_{e\in S}2^{\operatorname{index}(e)}$.
This deterministic tie rule uses no finite candidate-gap outcomes.
Cases whose retained
minimum occurs at $s^\star=1$, or for which no positive direction is
available, lead to abstention. The selected direction is then followed
along Eq.~\eqref{eq:alpha_path_framework}; hence finite outcomes do not
enter the selection of the direction.
For the fractional-step continuation tests, the re-minimized gaps are
evaluated with the refined gap locator and an absolute improvement
tolerance of \(10^{-10}\). At the complete replacement
\(\alpha=1\), both the starting and final supports are evaluated with
the production scan-and-zoom convention, using an absolute improvement
tolerance of \(10^{-12}\).

Along these paths we track all numerically resolved local minima of
$\Delta(s)$. A resolved bottleneck exchange requires two coexisting
local minima, separated by at least $0.02$ in $s$, to reverse their
depth ordering. Continuous motion of a single minimum, including
motion into an endpoint, is treated separately.

For each overlap comparison we evaluate the three lowest levels at
the active minima of both endpoint drivers and require
$E_2-E_1>10^{-9}$ at each point. Invalid rank-two comparisons are
flagged and excluded from overlap summaries. All 798 selected dense
paths pass both checks, including the 19 with an initial minimum at
$s=0$; none is excluded. Appendix~\ref{app:isolation} gives the measured
minimum separations and the failure-handling rule.

At resolved interior exchanges, the relinearization experiment compares
three directions: continuation of the original update, optimization
with respect to the incoming minimum alone, and the active-minima
max--min rule of Eq.~\eqref{eq:adaptive_active_menu}. Each is evaluated
with the same local step
\begin{equation}
t=\min(\epsilon,w_{\min}/4),
\qquad
\epsilon=10^{-2}.
\label{eq:relinearization_step}
\end{equation}
Here
\[
w_{\min}
=
\min_{e\in\operatorname{supp}(w_c)} (w_c)_e
\]
is the smallest positive coupling weight at the switching driver. This is a continuous local reweighting experiment on the expanded
support and not a completed unit-weight graph replacement.

The finite production search operates instead in the discrete class
$\mathcal S_B$. A legal move replaces one supported edge by one absent
edge while preserving connectivity. Guided search ranks complete
replacement directions, evaluates the finite gaps of the top $k=8$
candidates, and accepts the best improvement above $10^{-12}$. The
all-neighbor reference evaluates every legal replacement. The matched random control begins from the same supports and is
restricted to the finite candidate-gap evaluation cap of each matched
guided restart. Each random realization first takes the best final
gap across the six starts and divides it by the exhaustive reference.
The resulting 200 best-of-six ratios are then averaged within each
objective and paired with that objective's guided best-of-six ratio.
The random hit percentage averages the individual realization's hit
indicator, defined by $R\geq0.999999$, across all 200 realizations and
30 objectives. It does not threshold the averaged ratio.
This is an evaluation-cap match; guided search additionally computes
the sensitivity used to rank candidates. Appendix~\ref{app:random}
specifies the batching, stopping rules, seed mapping, exact aggregation
formulas, and 20,000 objective-bootstrap resamples.

For objective $i$, discrete-search performance is measured by
\begin{equation}
R_i
=
\frac{
\max_r
\widehat{\mathcal J}^{\rm final}_{i,r}
}{
\max_{S\in\mathcal S_B}
\widehat{\mathcal J}_i(S)
},
\label{eq:R}
\end{equation}
and a numerical optimum hit is recorded when
$R_i\geq0.999999$.

The local-selector study uses 30 accepted objectives in each of nine
$(N,\mathrm{geometry})$ cells, with
$N=7,8,10$, $K=\lfloor N/2\rfloor$, and $B=N$. The three physical
geometries are the cycle, star-plus-edge, and path-plus-chord.
Acceptance requires a one-exchange separation of the two lowest
classical configurations, successful discovery of two basins,
$s^\star\neq1$, and positive maximal raw and normalized sensitivities.
Practical and spectral selector choices are fixed before the refined
quantum diagnostics are recomputed.

For this comparison, every single-edge weight-increase direction $+\mathbf e_e$ is admissible and its quality is
\begin{equation}
Q_i(e)
=
\frac{g_e}{\max_j g_j},
\qquad
g_e=\mathcal J'(w;\mathbf e_e),
\qquad
\max_j g_j>0.
\label{eq:Q}
\end{equation}
Thus $Q$ measures how well a selector identifies the best available
\emph{local} normalized-sensitivity direction; it is not a finite-step
performance ratio.
Finally, to test whether the sensitivity--migration mechanism depends
on the dense quadratic generator, we use 30 independent sparse
mixed-sign frustrated Ising objectives on randomly relabeled
triangular-prism interaction graphs. These use the same $N=6$, $K=3$
sector, starting-support schedule, driver normalization, and oracle
protocol. Only the local sensitivity and migration analyses are
repeated for this second family.

The guided--random comparison in Table~\ref{tab:random} is paired
within each of the 30 objectives. Its 95\% percentile bootstrap
intervals resample objectives, after averaging the 200 random
realizations within each objective, as specified in
Appendix~\ref{app:random}. These intervals are pointwise across budgets.
Repeated starts and migration events are not independent objectives;
the event counts and the selector medians are descriptive summaries.
The selector summaries are conditional on the acceptance criteria
above. An interval containing zero is not evidence of equivalence.

\section{Results}
\label{sec:results}

The numerical results are organized around a single question. The
exchange-resolved sensitivity derived above is a local statement: it
identifies how the active minimum gap responds to an infinitesimal
change of the driver. A finite driver modification, however, changes
the Hamiltonian path itself and can alter which avoided crossing
controls the minimum gap. We therefore proceed in four stages. We first
verify the local sensitivity directly. We then follow the selected
direction to finite strength and identify how its predictive power is
lost. Next we test whether the resulting bottleneck-migration mechanism
persists in a different objective family. Finally, we ask whether
relinearizing the response when competing minima become active can
repair the finite-step failure. The exhaustive graph search and
classical-selector studies are then used as supporting tests of how
much of this local information remains useful when converted into
discrete or approximate guidance.

\subsection{From local sensitivity to finite driver changes}
\label{sec:validation}
\label{sec:oracle}

We begin with the point at which the mathematical construction makes
its sharpest prediction. For a given starting driver, the normalized
directional sensitivity is evaluated at the current minimum-gap
bottleneck, and the legal connected one-edge replacement with the
largest positive directional response is selected. No finite candidate
gap is used in making this choice. The subsequent finite evolution of
the selected direction therefore tests the distinction between the
local derivative and the finite driver modification rather than an
accept--reject search procedure.

The test uses the 900 starting graph--instance records of the dense
quadratic cohort,
\[
30\ \text{objectives}\times5\ \text{budgets}\times6\ \text{starts}
=900.
\]
For 102 starts the retained minimum occurs at $s^\star=1$, where the
driver cannot raise the endpoint gap, and the oracle therefore
abstains. The remaining 798 starts, consisting of 779 interior minima
and 19 minima at $s=0$, define oracle-selected update paths. If $r$ is
the removed edge and $a$ the added edge, the selected direction is
continued according to
\begin{equation}
\begin{gathered}
A(\alpha)
=
A(S_0)+\alpha(L_a-L_r),
\\
D(\alpha)
=
\frac{A(\alpha)}
{\lambda_{\max}[A(\alpha)]},
\qquad
0<\alpha\leq1 .
\end{gathered}
\label{eq:oracle_alpha_path}
\end{equation}
The limit $\alpha\rightarrow0^+$ probes the local derivative, whereas
$\alpha=1$ is the complete unit-edge replacement. Intermediate values
are weighted $B+1$-support drivers and are used only to diagnose the
continuation from the local to the finite regime.

The infinitesimal prediction is borne out directly. At both
$\alpha=10^{-3}$ and $10^{-2}$, all 798 selected directions increase
the re-minimized gap. This confirms that the selected directions improve the numerically
re-minimized gap at the two smallest tested step sizes.

The more informative behavior appears as the same direction is
continued away from the linear regime. The fractions of improving
paths are 99.75\%, 99.37\%, 99.00\%, and 96.87\% at
$\alpha=0.05$, $0.10$, $0.25$, and $0.50$, respectively. At the
complete replacement, 662 of the 798 paths improve ($82.96\%$). The evolution
is shown in Fig.~\ref{fig:oracle_steps}.

\begin{figure}[htbp]
\centering
\includegraphics[width=\linewidth]
{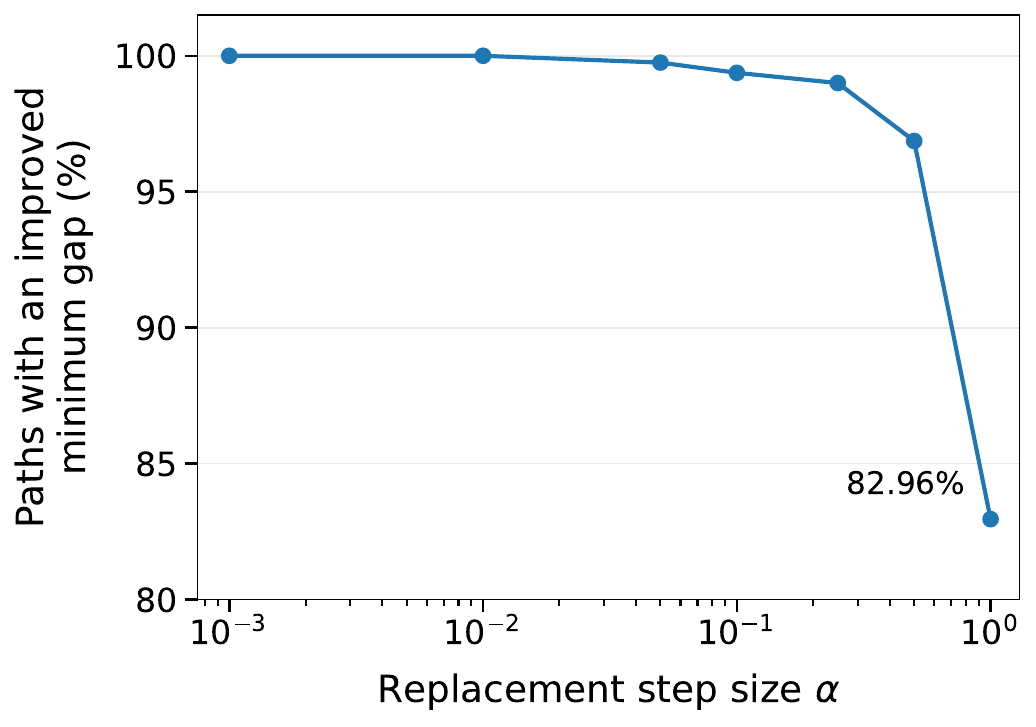}
\caption{
Oracle-selected directions from local perturbations to complete edge
replacements. The 798 non-abstained starts each define one fixed
replacement path $S_0\rightarrow S_1$ parameterized by $\alpha$.
All selected directions improve the re-minimized gap at
$\alpha=10^{-3}$ and $10^{-2}$, whereas the full replacement at
$\alpha=1$ improves $662/798=82.96\%$ of paths. Fractional
$\alpha$ values are weighted diagnostic drivers outside the
unit-weight $B$-edge class. Finite candidate gaps are not used to
select the direction; the line connects the tested step sizes as a
visual guide.
}
\label{fig:oracle_steps}
\end{figure}

This distinction is central. The local sensitivity is highly reliable
where it is mathematically local, but it is not a finite-replacement
guarantee. The full-replacement frequency in Figure~\ref{fig:oracle_steps}
uses all 798 non-abstained paths as its denominator. Among them, 797
have at least one improving complete neighbor at the $10^{-12}$
production-gap tolerance. Thus the same 662 successful replacements
give $662/798=82.96\%$ overall and $662/797=83.06\%$ conditional on an
improving complete neighbor existing. The remaining selected start
has a positive local direction and beneficial small fractional steps,
but no improving complete edge replacement. This explicitly
illustrates the difference between a locally beneficial reweighting
and a beneficial complete replacement; its parameters and gaps are
given in Appendix~\ref{app:isolation}.

For a start with a positive best-neighbor improvement, the fraction
of that improvement captured by the selected replacement is defined as
\begin{equation}
\begin{gathered}
\eta_{\rm move}
=
\frac{
\widehat{\J}(S_{\rm sens})-\widehat{\J}(S_0)
}{
\widehat{\J}_{\rm best}-\widehat{\J}(S_0)
},
\\
\widehat{\J}_{\rm best}
=
\max_{T\in\{S_0\}\cup\mathcal N(S_0)}
\widehat{\J}(T),
\end{gathered}
\label{eq:eta_move}
\end{equation}
Here $\mathcal N(S_0)$ denotes the set of legal connected supports
obtained from $S_0$ by one complete edge replacement. This ratio is
evaluated only when
$\widehat{\J}_{\rm best}-\widehat{\J}(S_0)>10^{-12}$.
Its pooled median over the 797 eligible starts is 0.892.

Endpoint-limited cases numbering 102 are useful for verifying consistency.
The oracle declines on all of them and there is no one-edge
neighbor with a legally improving case in the reorganised production space.
The key question is not whether the local derivative works; the small-step test
demonstrates that it does work in the tested ensemble. The
question is why a direction that initially raises the minimum gap can
cease to do so as the driver is changed by a finite amount.

\subsection{Bottleneck migration: repairing one minimum exposes another}
\label{sec:migration}

To answer this question, we follow the same 798 oracle-selected
directions while resolving the local minima of
$\Delta(s,\alpha)$ throughout the update. No re-ranking is performed
along the path. A bottleneck exchange is recorded when two distinct
local minima, separated by at least $0.02$ in $s$, coexist and reverse
their ordering in gap magnitude. Continuous motion of a single
minimum, including motion to an endpoint, is treated separately.

Resolved switching occurs on
\[
421/798=52.8\%
\]
of the paths, with 491 resolved exchange events in total.
Among these, 251 paths contain at least one
interior-to-interior exchange, giving 279 such events. The key
observation is the order in which the changes occur. On 420 of the 421
switching paths, the gap at the first resolved exchange is already
larger than the minimum gap of the original driver. Thus the dominant
sequence is not that the selected perturbation immediately fails.
Rather, it first does what the local sensitivity predicts: it repairs
the active bottleneck. As that bottleneck is raised, a different local
minimum of the same annealing spectrum becomes competitive and
eventually becomes the global minimum.

This process is illustrated by Figure~\ref{fig:migration} for one path. Two
local minima coexist over a finite interval of $\alpha$. The branch
that limits the original driver rises, while the competing branch
also rises but more slowly. Their crossing makes the competing branch
the new limiting gap. This example illustrates bottleneck migration,
not the separate subset of 136 events with a decreasing incoming branch.

\begin{figure}[htbp]
\centering
\includegraphics[width=\linewidth]
{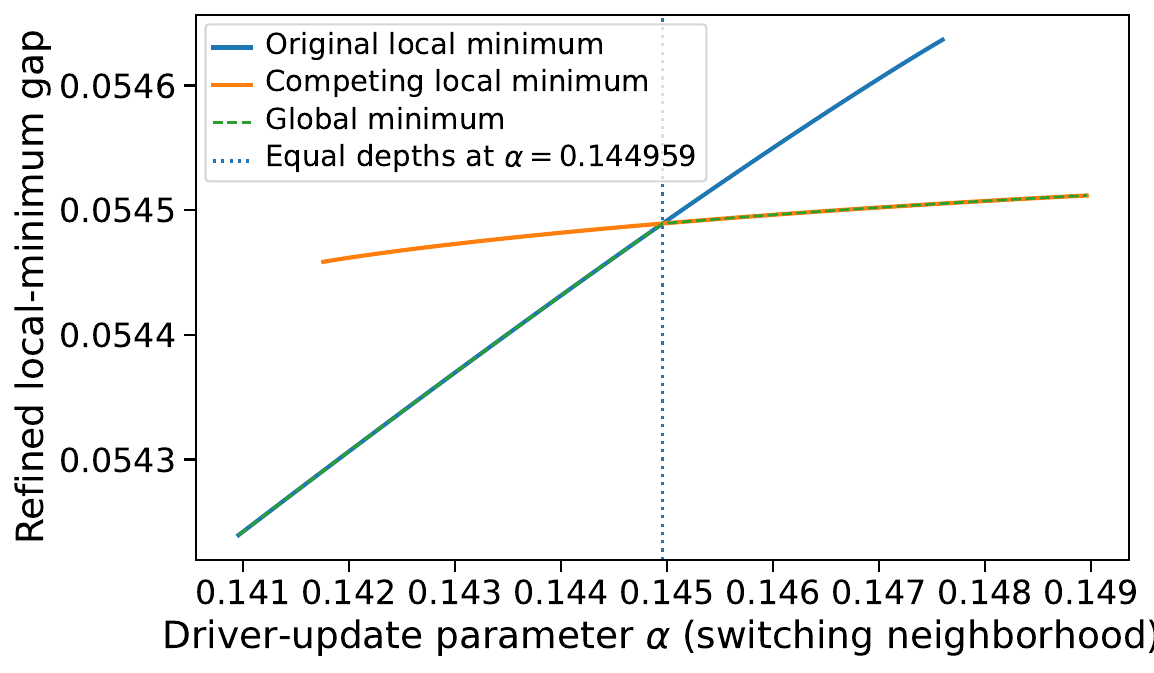}
\caption{
Example of a resolved bottleneck exchange along one oracle-selected
driver update. Two distinct local minima coexist and reverse their
depth ordering as $\alpha$ increases. In the illustrated $B=6$ path,
their gaps become equal near $\alpha=0.144959$, at annealing locations
$s\simeq0.337921$ and $0.436638$. The original bottleneck is first
raised, after which the competing minimum becomes globally limiting.
Both branches increase in the displayed neighborhood; the incoming
branch becomes limiting because it increases more slowly.
This ``repair-and-expose'' behavior is the finite-update mechanism
examined in the text; cohort-level statistics use all 798
non-abstained paths.
}
\label{fig:migration}
\end{figure}

The exchange of the active minimum is accompanied by a marked change
of the low-energy sector. To quantify this without relying on the
identity of individual eigenvectors, we use the rank-two projector
\begin{equation}
\Pi
=
|\psi_0\rangle\langle\psi_0|
+
|\psi_1\rangle\langle\psi_1|,
\qquad
O_{\rm sub}
=
\frac{1}{2}
\operatorname{Tr}
\left(
\Pi_{\rm before}\Pi_{\rm after}
\right).
\label{eq:subspace_overlap}
\end{equation}
Identical two-dimensional low-energy subspaces give $O_{\rm sub}=1$.
The endpoint spectra satisfy the isolation test in
Appendix~\ref{app:isolation} for every comparison.
The median initial-to-final overlap is 0.449 on switching paths,
compared with 0.915 on paths without a resolved switch
(Fig.~\ref{fig:subspaces}). The two path groups are not independent
randomized populations, so this comparison is descriptive rather than
causal. It nevertheless shows that resolved bottleneck migration is
associated with substantially stronger reorganization of the
low-energy subspace.
\begin{figure}[tb]
\centering
\includegraphics[width=\linewidth]
{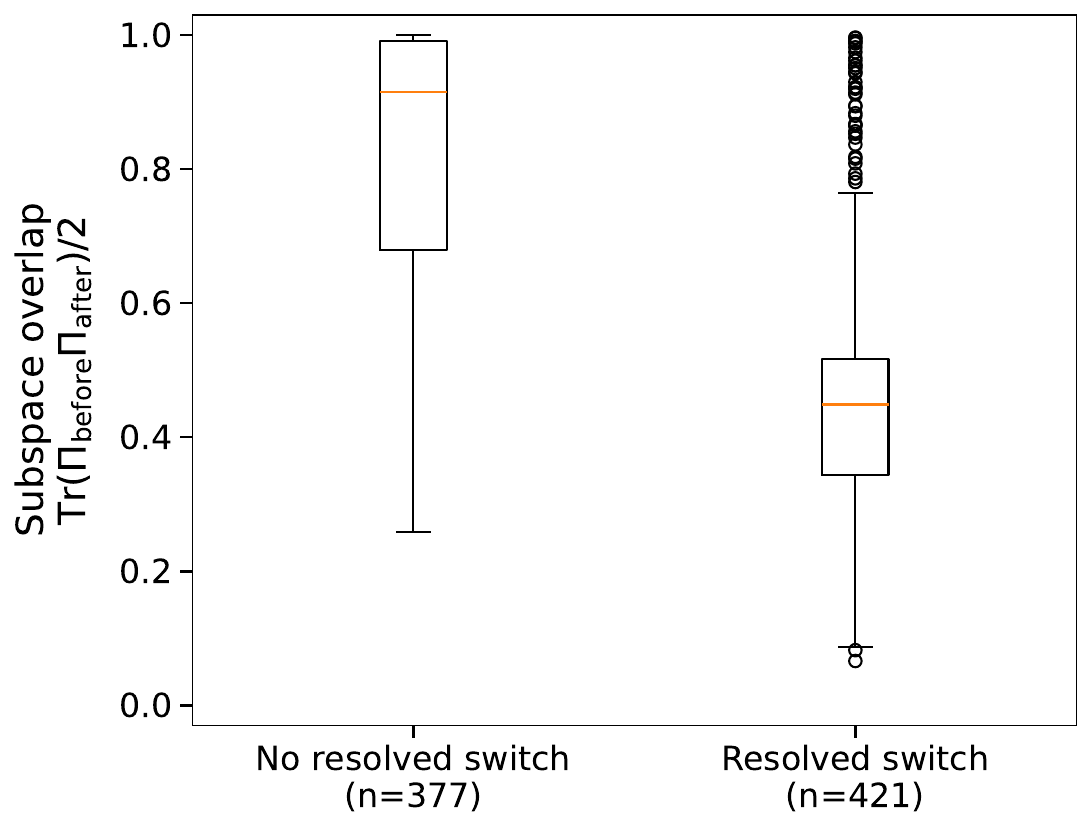}
\caption{
Initial-to-final low-energy subspace overlap, evaluated at each
driver's active minimum, for 377 paths without a resolved switch and
421 switching paths. The medians are 0.915 and 0.449, respectively.
The path groups share the same 30 objectives, so their separation is
an association rather than an independent-sample causal comparison.
Both the driver and the active annealing coordinate change in this
diagnostic. All 798 comparisons satisfy $E_2-E_1>10^{-9}$ at both
evaluation points; no rank-two comparison was excluded.
}
\label{fig:subspaces}
\end{figure}
The branch-resolved derivatives make the mechanism still more explicit. At 136 of the 279 interior-to-interior exchange events, the departing branch has a positive derivative along the original update direction while the incoming branch has a negative derivative. Thus the original direction is still locally doing the job for which it was selected: it continues to raise the old bottleneck. Its failure is that this bottleneck is no longer the one that determines
$\J$. Once the second branch becomes active, continuing to optimize the first branch is no longer equivalent to optimizing the lower envelope. This mechanism explains part, but not all, of the loss of finite-step reliability. Complete replacements are harmful on 101/421 switching paths (24.0\%) and 35/377 paths without a resolved switch (9.3\%). Most switching paths therefore still end with an improved gap, and some harmful paths have no resolved switch. Bottleneck migration is accordingly neither necessary nor sufficient for a harmful finite replacement. What it provides is a directly identifiable mechanism by which the locally correct direction can cease to be the appropriate direction for the global minimum gap.
\subsection{The mechanism is not specific to the dense objective family}
\label{sec:frustrated_second_family}
The repair-and-expose interpretation would be of limited significance if it resulted only from the dense quadratic objective generator. We therefore repeat the oracle sensitivity and migration tests for 30 independently generated mixed-sign frustrated Ising objectives,
\begin{equation}
\begin{gathered}
C_F(x)
=
\sum_{\{i,j\}\in E_P} J_{ij} z_i z_j
+
\sum_i f_i z_i,
\\
z_i=2x_i-1,
\qquad
\sum_i x_i=3 ,
\end{gathered}
\label{eq:frustrated_objective}
\end{equation}
where \(E_P\) is a randomly relabeled triangular-prism interaction
graph, \(J_{ij}=\pm1\) with equal probability, and
\(f_i\sim\mathcal{U}(-0.15,0.15)\). Draws are retained only when both
bond signs are present, the preferred bond relations are frustrated
so that they cannot all be satisfied simultaneously, and the feasible
objective has a unique minimum separated from the next feasible value
by more than \(10^{-8}\). We retain \(N=6\), \(K=3\), the same five driver budgets,
the same six-start schedule, the same driver normalization, and the
same oracle direction-selection principle. The numerical minimum-gap
locators are family specific: the dense-family selections use the
refined oracle locator, whereas the frustrated-family selections use
a 321-point uniform scan augmented by endpoint probes and local-minimum
refinement. This is a mechanism test only; the exhaustive-search and
classical-selector studies are not repeated for the second family.

Of the 900 starting records, 809 admit an oracle-selected direction.
Of the remaining 91 cases, 90 are endpoint-limited abstentions and one
is a numerical-cluster abstention associated with a very small retained
gap.

The finite landscape again develops resolved bottleneck exchanges, but
with a different frequency. Switching occurs on
\[
188/809=23.2\%
\]
of the frustrated-family paths, compared with 52.8\% in the dense
family. All 188 first switches occur after the minimum gap has risen
above its initial value. Moreover, switching paths again have a much
smaller median low-energy-subspace overlap than nonswitching paths,
0.478 compared with 0.962. The comparison is summarized in
Table~\ref{tab:family_comparison}.

\begin{table*}[tp]
\centering
\small
\caption{
Comparison of the local-sensitivity and bottleneck-migration
mechanisms at $N=6$, $K=3$. Each family contains 30 independently
generated objectives and 900 starts on the same driver-support
schedule. Path counts are repeated observations within objectives and
should not be interpreted as independent-objective counts.
}
\label{tab:family_comparison}
\begin{tabular}{@{}lcc@{}}
\toprule
Diagnostic & Dense quadratic & Sparse frustrated \\
\midrule
Non-abstained directions
& 798 & 809 \\
Gap increases at $\alpha=10^{-3},10^{-2}$
& 798/798 & 809/809 \\
Paths with resolved switch
& 421/798 (52.8\%) & 188/809 (23.2\%) \\
Interior--interior switching paths
& 251/798 & 81/809 \\
Median overlap: switching
& 0.449 & 0.478 \\
Median overlap: no resolved switch
& 0.915 & 0.962 \\
First switch above initial gap
& 420/421 & 188/188 \\
\bottomrule
\end{tabular}
\end{table*}

The relevant cross-family result is therefore not a universal switching
frequency. The frequency changes substantially with the objective
family. What persists is the sequence of phenomena: the local
sensitivity correctly predicts small driver changes, finite
continuation can change the active bottleneck, the first resolved
exchange generally occurs after the original gap has been raised, and
switching is accompanied by stronger low-energy reorganization. This
is the level at which the numerical evidence supports a mechanism that
is not tied to one objective generator.

\subsection{Relinearization at a bottleneck exchange}
\label{sec:oracle_relinearization}

The migration picture suggests a direct response. Once two minima
become co-active, the derivative should no longer be optimized for only
the bottleneck from which the update began. The relevant local object
is instead the lower envelope of the active gap branches. We therefore
test whether recomputing the direction at a resolved exchange and
protecting both active minima restores local improvement.

The test uses all 279 dense-family interior-to-interior exchange events,
arising from 251 paths and 27 distinct objectives. At the equality
driver $w_c$, the two resolved minima are active within a gap tolerance
of $10^{-10}$. Three directions are compared. The first simply
continues the direction chosen at the original bottleneck. The second
re-optimizes using only the newly incoming minimum. 
The third direction uses the active-minima max--min prescription defined in Eq.~\eqref{eq:adaptive_active_menu}, evaluated at the equality driver \(w_c\), where the competing minima are co-active.  Here $\mathcal A_c$ contains the active minima. The candidate set
$\mathcal D_c$ contains all ordered weight transfers between currently
supported edges, and the zero direction permits abstention if no
candidate has a sufficiently positive worst-branch response.

All three choices are evaluated with the same feasibility-capped step,
\[
t=\min(\epsilon,w_{\min}/4),
\qquad
\epsilon=10^{-2}.
\]
This is deliberately a local continuous reweighting experiment on the
expanded support. It is not a completed unit-weight graph-search move.

The distinction between the three rules is sharp
(Table~\ref{tab:adaptive_primary}). Continuing the original direction
is harmful at 171 of the 279 events. Re-optimizing only the incoming
minimum reduces the number of harmful outcomes to 115 but does not
remove them. The active-minima rule improves the gap at
278 events and abstains at the remaining event, with no harmful outcome observed.
Of the 171 events at which simple continuation is harmful, 170 become
improvements under the active-minima rule and one becomes unchanged.

\begin{table*}[tp]
\centering
\small
\caption{
One-step oracle relinearization at 279 interior bottleneck exchanges
arising from 27 contributing objectives. All rules use the same
feasibility-capped local step. Mean
$r=(J_{\rm after}-J_c)/J_c$. Counts are event-level observations and
are not 279 independent-objective trials.
}
\label{tab:adaptive_primary}
\begin{tabular}{@{}lrrrr@{}}
\toprule
Direction rule & Improve & Harm & Unchanged & Mean $r$\\
\midrule
Continue original
& 108 & 171 & 0 & $-0.002143$\\
Incoming minimum only
& 164 & 115 & 0 & $+0.000239$\\
Both active minima
& 278 & 0 & 1 & $+0.003272$\\
\bottomrule
\end{tabular}
\end{table*}

The result is unchanged qualitatively when the nominal step is reduced
by a factor of ten to $\epsilon=10^{-3}$: the active-minima rule
improves 278 of the 279 events and leaves 1 unchanged, with no
observed harmful outcome. Continuation produces 111 improvements and
168 harmful outcomes, while incoming-minimum-only optimization gives
166 improvements and 113 harmful outcomes.

This proves the mechanism test.
The fact that there is a failure of the continuation to be infinite does not mean it can simply be patched up by noting the new active branch and making it optimal. At a bottleneck crossing the minimum gap is locally a lower-envelope problem and the direction must protect all co-active minima. The result is not a general monotonicity theorem: the switching points, active minima and their quantum states are oracle information, and the candidate directions are a finite local menu. Rather, it shows that the mathematical lower-envelope prescription yields the expected local redirection when the active-set information is provided.

\subsection{From the local mechanism to discrete graph search}
\label{sec:discrete_search}

The preceding results concern controlled local directions. We next ask
a more limited practical question: whether this information remains
useful when it is converted into complete unit-edge replacements in a
finite graph search. The exhaustive $N=6$ landscape provides a
numerical reference for this purpose, rather than a scalability claim.

For each of 30 matched objectives and each budget $B=5,\ldots,9$, six
starting supports are used. The guided search ranks legal complete
replacement directions by the local sensitivity, evaluates the finite
gaps of the top eight candidates, and accepts the best improvement.
Table~\ref{tab:production} compares the resulting best-of-six value
with the exhaustive unit-weight reference.

The guided best-of-six median is $R=1$ at every budget, but the lower
tail shows that the discrete landscape remains nontrivial. At $B=5$,
the lower quartile and minimum are 0.917 and 0.530; at $B=6$ they are
0.977 and 0.841. For $B\geq7$, the lower quartile reaches one, while
isolated instances still remain below the exhaustive reference.
Moreover, even the all-neighbor search misses the exhaustive optimum
on some sparse instances. Knowing every improving one-replacement
neighbor is therefore insufficient to guarantee a globally optimal
trajectory.

The sensitivity ranking should also be distinguished from the benefit
of simply evaluating several candidate moves. We therefore compare it
with a random control using the same finite-candidate evaluation cap.
The results are shown in Table~\ref{tab:random}.

\begin{table*}[tp]
\centering
\small
\setlength{\tabcolsep}{4pt}
\caption{
Guided versus evaluation-cap-matched random search on the same
30 instances and six starts. Each of 200 random realizations first
takes a best-of-six ratio; the table averages these ratios or their
hit indicators as appropriate. $\overline R$ is the mean instance ratio and
$\delta\overline R$ is guided minus expected random. Intervals are
pointwise 95\% objective-bootstrap intervals; see Appendix~\ref{app:random}.
}
\label{tab:random}
\begin{tabular}{rrrrrrr}
\toprule
$B$ & Guided hit (\%) & Random hit (\%) &
$\overline R_g$ & $\overline R_r$ &
$\delta\overline R$ & 95\% interval\\
\midrule
5 & 56.7 & 54.5 & 0.9407 & 0.9457 & -0.0050 &
[-0.0344,+0.0203] \\
6 & 70.0 & 66.6 & 0.9781 & 0.9709 & +0.0072 &
[-0.0010,+0.0172] \\
7 & 80.0 & 74.6 & 0.9961 & 0.9866 & +0.0095 &
[+0.0031,+0.0178] \\
8 & 93.3 & 83.0 & 0.9927 & 0.9849 & +0.0078 &
[+0.0018,+0.0158] \\
9 & 93.3 & 86.3 & 0.9980 & 0.9923 & +0.0057 &
[+0.0007,+0.0124] \\
\bottomrule
\end{tabular}
\end{table*}

The guided best-of-six median is $R=1$ at every budget, but the lower
tail shows that the discrete landscape remains nontrivial. At $B=5$,
the lower quartile and minimum are 0.917 and 0.530; at $B=6$ they are
0.977 and 0.841. For $B\geq7$, the lower quartile reaches one, while
isolated instances still remain below the exhaustive reference.
Moreover, even the all-neighbor search misses the exhaustive optimum
on some sparse instances. Knowing every improving one-replacement
neighbor is therefore insufficient to guarantee a globally optimal
trajectory.

\begin{table*}[tp]
\centering
\small
\caption{
Production search at $N=6,K=3$, with 30 matched instances and six
shared starts per instance. Guided search uses $k=8$. Hits satisfy
$R\geq0.999999$ relative to the numerically evaluated exhaustive
reference. Median, $Q_{25}$, and minimum describe the guided
best-of-six $R$. Miss is the percentage of all 180 guided trajectories
that terminate despite the existence of an improving one-edge
neighbor.
}
\label{tab:production}
\begin{tabular}{rrrrrrrr}
\toprule
$B$ & Graphs & Guided hits & All-neighbor hits &
Median & Q25 & Min. & Miss (\%)\\
\midrule
5 & 1,296 & 17/30 & 26/30 & 1.000 & 0.917 & 0.530 & 28.9 \\
6 & 3,660 & 21/30 & 27/30 & 1.000 & 0.977 & 0.841 & 28.9 \\
7 & 5,700 & 24/30 & 26/30 & 1.000 & 1.000 & 0.953 & 23.9 \\
8 & 6,165 & 28/30 & 30/30 & 1.000 & 1.000 & 0.877 & 14.4 \\
9 & 4,945 & 28/30 & 29/30 & 1.000 & 1.000 & 0.947 & 6.1 \\
\bottomrule
\end{tabular}
\end{table*}

The guided best-of-six median is $R=1$ at every budget, but the lower
tail shows that the discrete landscape remains nontrivial. At $B=5$,
the lower quartile and minimum are 0.917 and 0.530; at $B=6$ they are
0.977 and 0.841. For $B\geq7$, the lower quartile reaches one, while
isolated instances still remain below the exhaustive reference.
Moreover, even the all-neighbor search misses the exhaustive optimum
on some sparse instances. Knowing every improving one-replacement
neighbor is therefore insufficient to guarantee a globally optimal
trajectory.

The sensitivity ranking should also be distinguished from the benefit
of simply evaluating several candidate moves. We therefore compare it
with a random control using the same finite-candidate evaluation cap.
The results are shown in Table~\ref{tab:random}.
At $B=5$ and $6$, the guided--random differences are small and their
intervals include zero. At $B=7,8,9$, the mean guided ratios exceed the
random-control means by approximately 0.0095, 0.0078, and 0.0057,
respectively, with the corresponding pointwise intervals above zero.
The discrete-search evidence is therefore deliberately weaker than the
local mechanism result. It shows that the sensitivity can provide
useful candidate ordering in part of the tested finite landscape; it
does not show that local sensitivity defines a globally convergent
graph-search algorithm.

\subsection{How much of the quantum direction can be inferred classically?}
\label{sec:local_selectors}
The oracle results use the low-energy eigenstates at the active
bottleneck. A separate question is whether the favorable local
direction can be inferred from simpler information. We compare four
selectors: the normalized initial-gap spectral direction, the
singleton relax-swap rule, the practical inter-basin coupling score,
and a retrospective oracle-pair version of the same coupling
construction. Their quality is measured by
\[
Q_i(e)=\frac{g_e}{\max_j g_j},
\]
so that $Q\simeq1$ indicates that the selected edge nearly maximizes
the true normalized local sensitivity for that instance.

The results in Table~\ref{tab:selectors} concern 30 accepted objectives
per size--geometry cell, or 270 observations in total. These are
conditioned on one-exchange separation of the two lowest classical
configurations, successful discovery of two basins, a non-endpoint
production bottleneck, and positive maximal raw and normalized
sensitivities; see Appendix~\ref{app:selectors}. They should not be
interpreted as an unfiltered sample of dense objectives. Within this
conditioned ensemble, the results do not support a universal classical
replacement for the quantum sensitivity.

\begin{table*}[tp]
\centering
\small
\setlength{\tabcolsep}{4pt}
\caption{
Refined-point median normalized-sensitivity quality $Q$ on the same
30 accepted objectives per cell, subject to the one-exchange,
two-basin, and sensitivity filters in Appendix~\ref{app:selectors}.
$m_2$ is Fiedler multiplicity and
$n_+$ counts positive spectral initial-gap derivatives over all
physical pairs, with denominator $\binom{N}{2}$. It does not count
positive interior minimum-gap sensitivities. Practical and spectral
selected edges are held fixed; the retrospective oracle pair is
recomputed at the refined minimum. ``Coupling'' is the practical
$\gamma=8$ score. $\simeq0^{-}$ denotes a small negative median.
Rounded values of one do not imply that every case is optimal.
}
\label{tab:selectors}
\ifshowchanges\color{RevisionColor}\fi
\begin{tabular}{@{}rlrrrrrr@{}}
\toprule
$N$ & Geometry & $m_2$ & $n_+$ & Spectral & Singleton & Coupling & Oracle pair\\
\midrule
7 & cycle & 2 & 0/21 & 0.117 & 0.806 & 0.351 & 0.979 \\
7 & star+edge & 4 & 0/21 & 0.003 & 0.829 & 0.872 & 0.996 \\
7 & path+chord & 1 & 17/21 & 0.115 & 0.118 & 0.782 & 1.000 \\
8 & cycle & 2 & 0/28 & 0.084 & 0.203 & 0.578 & 1.000 \\
8 & star+edge & 5 & 0/28 & $\simeq0^{-}$ & 0.602 & 0.422 & 0.521 \\
8 & path+chord & 1 & 24/28 & 0.171 & 0.172 & 0.130 & 1.000 \\
10 & cycle & 2 & 0/45 & 0.000 & 0.241 & 0.115 & 1.000 \\
10 & star+edge & 7 & 0/45 & 0.000 & 0.718 & 0.941 & 1.000 \\
10 & path+chord & 1 & 39/45 & 0.255 & 0.076 & 0.188 & 0.885 \\
\bottomrule
\end{tabular}
\end{table*}

For the cycle and star-plus-edge geometries, Fiedler degeneracy prevents
a positive single-edge initial-gap direction, consistent with the
rank-one limitation derived in Sec.~\ref{sec:limits}. The spectral selector in these
cells therefore identifies the least-unfavorable spectral direction,
not necessarily an improving minimum-gap direction. The
path-plus-chord graph supplies the nondegenerate control. Even there,
the ordering is not universal: the spectral medians are 0.115, 0.171,
and 0.255 for $N=7,8,10$, while the practical coupling medians are
0.782, 0.130, and 0.188. Neither graph connectivity alone nor the
tested problem-informed approximation therefore reproduces the quantum
direction uniformly across sizes and geometries.

The basin construction nevertheless reveals where an important part of
the practical error enters. At the refined bottlenecks, the practical
basin pair agrees with the retrospective crossing-relevant pair in
\PairMatches/\PairTotal{} observations (\PairPct\%). Conditional on a
correct pair, the pooled median practical-coupling quality is
\RightPairQ{}, whereas it falls to \WrongPairQ{} when the selected
pair is incorrect. Replacing only the pair by the retrospective
crossing-relevant pair on those same mismatched observations raises the
median to \WrongPairOracleQ{}, with a median paired gain of
\WrongPairGain{}. Figure~\ref{fig:pair} shows this controlled
diagnostic.

\begin{figure}[htbp]
\centering
\includegraphics[width=\linewidth]{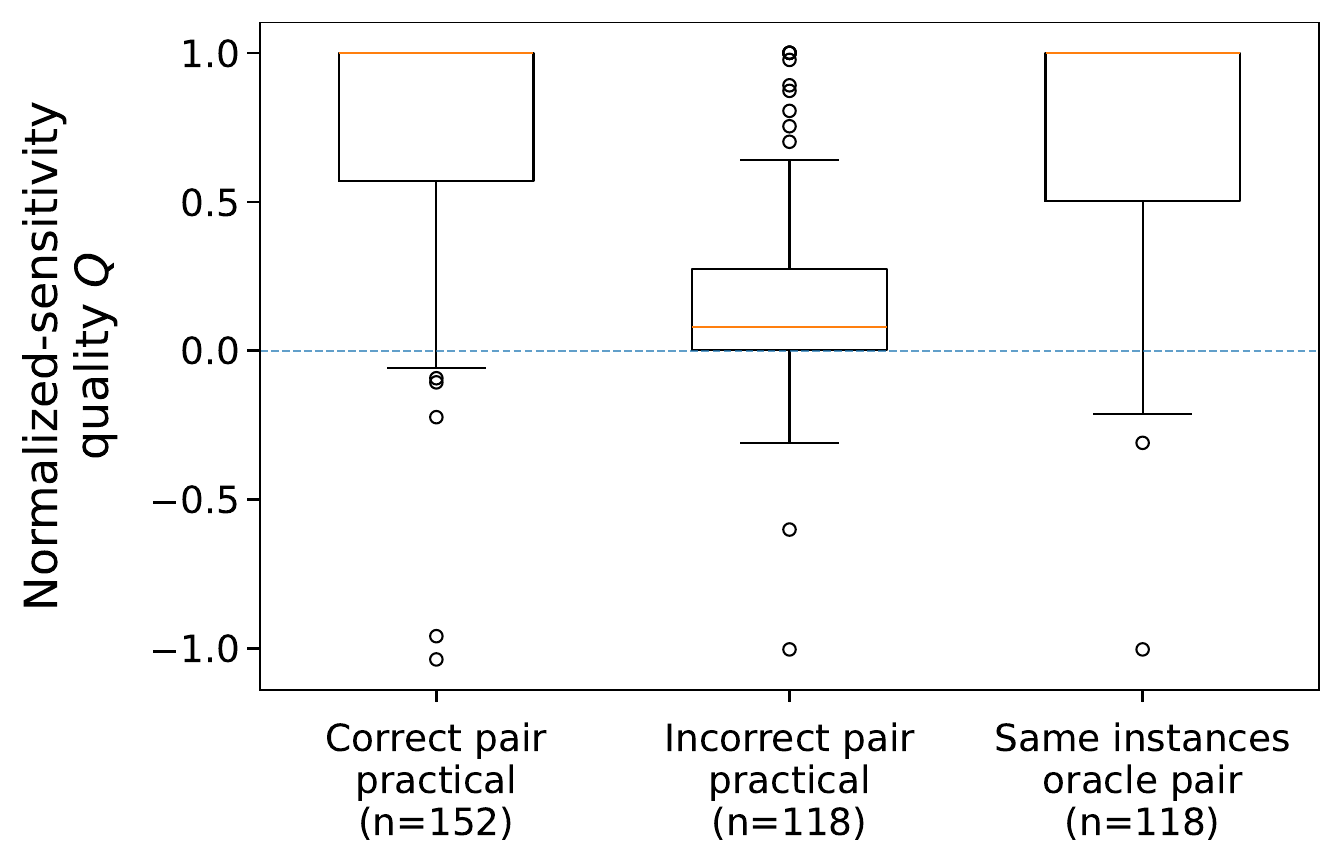}
\caption{
Refined-minimum basin-pair diagnostic pooled descriptively over nine
cells of 30 accepted objectives, using the same conditioned ensemble
as Table~\ref{tab:selectors} and Appendix~\ref{app:selectors}.
There are 152 matching pairs and 118 mismatches. The two
rightmost groups contain the same 118 observations, first using the
practical pair and then the retrospective crossing-relevant pair.
Boxes show medians and interquartile ranges; whiskers use the standard
1.5-IQR convention and points show outliers. Negative $Q$ values are
allowed by Eq.~\eqref{eq:Q}. The retrospective pair is
used only as a diagnostic reference and is not available to the
practical selector.
}
\label{fig:pair}
\end{figure}

Pair discovery is not the only limitation. Even with the retrospective
pair, the median quality is 0.521 in the $N=8$ star-plus-edge cell and
0.885 for $N=10$ path-plus-chord. Unequal mixing, imperfect basin
amplitudes, participation of additional states, and the normalization
term in the exact sensitivity can therefore still matter. The
classical-selector analysis should consequently be read as a
diagnostic of which information is missing, rather than as a scalable
replacement for the quantum-state response.

\section{Discussion}
\label{sec:discussion}

Experiments separate local reaction from finite-step
reliability. All directions chosen by the re-optimization procedure
enhance the re-minimized gap at the two smallest steps in both
families, although the finite update may affect the identity of the
local minimum controlling the global gap. For all 420 of 421
dense switching paths, the first branch exchange happens above
the initial gap. It indicates the operation of the “repair-and-expose” mechanism, in which repairing the active bottleneck exposes a new competing minimum. The sparse family demonstrates the same ``repair-and-
expose'' mechanism but with reduced switching frequency.

For 136 interior branch exchanges, the original direction lifts the
departing branch and drops the entering branch. Associated with it
overlap reduction suggests that migration involves the low-energy
subspace reorganization. Protection of the lower envelope requires accounting for all active minima. The active-minima rule improves 278 of 279 exchanges and abstains at the remaining event, with no harmful outcome, whereas continuation of the original direction and incoming-minimum-only optimization degrade 171 and 115 events, respectively. The smaller step check yields the same conclusion.
These results support relinearization in the neighborhood of a bottleneck exchange, but they do not imply improvement under an arbitrary finite step.

The discrete search procedure verifies whether this quantum behavior retains its significance as an ordering of full replacements. It often does at the
tested $N=6$ scale; the evaluation-cap-matched comparison shows a
modest advantage at larger budgets. Even all-neighbor search can
stop below the exhaustive optimum, demonstrating the limits of
local information. Likewise, neither graph-spectral nor the tested
classical selectors dominate uniformly. The basin-pair experiment
identifies one source of error: a useful coupling score depends on
identifying the competitors that actually define the bottleneck.
Finite driver changes make this identity dynamical.

The results use small feasible sectors, exact objective normalization,
exhaustive graph landscapes, and oracle low-energy states as benchmark
resources. They establish a finite-size mechanism rather than a
scalable prescription or finite-time quantum advantage. The sparse
family tests sensitivity and migration, but not the full search or
selector pipeline. Relinearization uses resolved interior switching
points and their quantum states; its 279 events arise from 27
objectives and are repeated observations. Selector results concern
an explicitly conditioned sample, and continuous minima are found
numerically.

A scalable method would additionally need to detect competing
bottlenecks, estimate their active-set responses from accessible
information, and choose steps that retain local validity. The present
analysis specifies the response and lower-envelope structure such
an extension must respect.

\section{Conclusion}
\label{sec:conclusion}

We have formulated sparse constraint-preserving driver design as a
local minimum-gap response problem. For fixed-cardinality Laplacian
drivers, the response can be resolved into physical exchange channels
through the low-energy Dirichlet-energy contrast, including the
correction arising from driver normalization. Numerical tests in two
objective families confirm that this sensitivity reliably identifies
locally favorable driver changes.

The central finite-step effect is bottleneck migration. A direction
that repairs the current minimum gap can expose a different competing
minimum, so that a locally correct direction ceases to optimize the
global minimum gap. At resolved exchanges of the active bottleneck,
relinearizing the response over all co-active minima is substantially
more reliable than continuing the original direction or optimizing
only the incoming minimum.

Thus the relevant object for driver design is not a universally
preferred graph topology, but the evolving low-energy bottleneck of
the annealing path. The present results establish this local
sensitivity--migration--relinearization mechanism; they do not
establish global convergence, scalable autonomous driver design,
finite-time speedup, or quantum advantage.

\appendix
\section{Numerical methods and reproducibility}
\label{app:repro}

The calculations employ the normalised Hamiltonian of Eq.~\eqref{eq:annealing_path}. Gap minima and exchange points are numerical estimates. The implementation, seeds and solver settings are documented in the accompanying code and numerical specification in detail.

\subsection{Instances and numerical conventions}
\label{app:generators}
\label{app:locators}

We lexicographically order feasible $K$-subsets and physical pairs $(i,j)$, $i<j$, and define the support mask $m(S)=\sum_{e\in S}2^{\operatorname{index}(e)}$ by the edge indices.
NumPy \texttt{default\_rng} generators are advanced by rejected draws.
In the dense generator in Sec.~\ref{sec:protocol}, dense objective $i$ uses seed $1000003N+9176i$ and a $N\times N$ matrix $M$ with standard normal entries.
Seed for the frustrated family: \texttt{SeedSequence([20260928,31001,i])}. Edges for a randomly relabelled triangular prism. Before relabelling: \{01,12,02,34,45,35,03,14,25\} where $ij$ means the unordered pair $\{i,j\}$.
Both bond signs and frustration are needed. The two lowest viable costs in each family must differ by more than $10^{-8}$. Rejected complete draws are retried, up to 10,000 times.
The dense and sparse comparisons use $i=0,\ldots,29$.

Starting supports use seed $20260922+10000B+i$ and uniform draws of $B$ distinct edges, rejecting disconnected graphs. The first six accepted supports define the starts, where duplicates across starts are allowed. Exhaustive enumeration considers all connected labelled support, without identifying isomorphic graphs.

The production gap locator samples 41 equally spaced points in $[0,1]$ and does three 13-point zooms, with the initial half-width $1/40$ reduced by six after each zoom. The retained point is replaced by strictly smaller gaps. The dense oracle locator instead samples 161 uniform points with 49 logarithmic endpoint distances from $10^{-8}$ to $10^{-2}$ and refines sampled valleys with bounded minimisation at position tolerance $10^{-13}$. The frustrated-family oracle locator is a 321 point profile.
Eigenvalue clusters use tolerance $10^{-10}+10^{-9}\max(1,|\lambda_*|)$ about $\lambda_*$. Directional responses use the appropriate extremal eigenvalue in complete degenerate eigenspaces. An unresolved ground-state cluster results in numerical abstention.

The fixed oracle directions in Figs.~\ref{fig:oracle_steps}--\ref{fig:subspaces} and the corresponding Table~\ref{tab:family_comparison} calculations achieve scores within $10^{-12}$ of the maximum are treated as tied, and the candidate with the smallest proposed support mask is selected. An initial $s^\star\geq1-10^{-10}$ to cause endpoint abstention; a selected score at most zero to cause score abstention. The refined locator and a threshold of $10^{-10}$ give fractional improvements, and the production locator and a threshold of $10^{-12}$ give complete improvements.

\subsection{Continuation, isolation, and relinearization}
\label{app:continuation}
\label{app:isolation}
\label{app:relinearization}

Continuation uses 641 uniform s points, 65 logarithmic endpoint distances in the range $10^{-10}$ to $10^{-2}$ and bounded and derivative-root refinement of observed valleys. The first update grid has 81 uniform $\alpha$ points plus $0.001$ and $0.01$. Then, interior branches are matched one-to-one by minimising total $s$ displacement, accepting matches within $0.15$; endpoints retain their identity. For active-minimum motion larger than $0.025$ or a coexisting, nearly tied, or unresolved transition, intervals are bisected down to width $10^{-4}$ with a 450-profile limit.

For a resolved exchange, there need to be coexisting local
minima that differ by at least $0.02$ in $s$, opposite gap orderings,
and equality between them refined in the interval. Gap-depth comparisons use
$10^{-10}+10^{-8}\max(|a|,|b|)$; the refined equality and global-minimum
errors must each be below $10^{-9}$.
Endpoint touches and unresolved switches are stored separately. ``No
resolved switch'' implies there were none in accordance with this
protocol. For the dense run, there are neither unresolved nor capped
walks, for the sparse run, there is one unresolved walk and no capped
walks. There are no invalid rank two overlap comparisons in either set.

Rank-two overlap condition calls for $E_2-E_1>10^{-9}$
in the starting and ending active minima. Failed cases are marked and
discarded, and not arbitrary eigenvectors from some cluster chosen.
All of 1,596 dense terminal evaluations meet the test, among which there
are 19 starting minima at $s=0$; minimum gaps are $0.00352365$
initially and $0.00196139$ finally.

Of 798 dense starts chosen, 797 of them have an improved complete
neighbor, and there are 662 improvements for chosen replacements. There
is one special start of $B=5$, $i=11$, $r=0$, whose masks are
$26884\to24838$. Its score is about $0.158432$, and its gap gets an
increase by $0.00157281$ when $\alpha=0.01$, but there is no improved
production gap due to its complete neighbors. Its chosen full replacement
reduces the gap from $0.0693910$ to $0.0325896$. Also,
all 102 endpoint-excluded starts do not have any improved complete
neighbors.

At an interior exchange, all the retained minimum points within $10^{-10}$
of the global gap are part of the active set. The weight shift menu transfers weight from each positive edge to another edge according to removal and addition indices. In the incoming and active minima criterion, the first maximizing item is selected, with an abstention if the best criterion value is at most $10^{-10}$. For all the criteria, $t=\min(\epsilon, w_{\min}/4)$, where $\epsilon=10^{-2}$ in Table~\ref{tab:adaptive_primary} and $10^{-3}$ in the finer step size test. Re-minimization also explores around the active positions. Changes greater than $10^{-10}$ improve things, less than $-10^{-10}$ hurt and otherwise leave things unchanged.

\subsection{Search controls, aggregation, and selector conditioning}
\label{app:random}
\label{app:selectors}
Guided production search sorts all legal connected one-edge-replacement neighbors in
order of their scores using the edge tuple in the lexicographic
way to break ties. It examines the top eight and accepts the
candidate with the largest increase in production gap greater than
$10^{-12}$. The all-neighbor comparator examines all neighbors.
Both terminate upon no progress or after 50 comparisons. Every
query made by the candidate is counted, except for the initial
gaps, sensitivity evaluations, and offline exhaustive evaluation.

A random restart uses the same cap for the matched guided
restart's evaluation. It randomly shuffles the legal neighbors and
tests in batches of at most eight until the available cap is used
up. It accepts the first member of the batch that makes
improvement and terminates when the cap runs out or no batch
succeeds. This matches candidate
caps, not total runtime.
For objective $i$, realization $m$ first computes
\[
R^{\rm rand}_{i,m}=
\frac{\max_{r=0,\ldots,5}J^{\rm rand}_{i,m,r}}{J_i^\star},
\qquad
\overline R^{\rm rand}_i=\frac1{200}\sum_{m=1}^{200}R^{\rm rand}_{i,m}.
\]
Table~\ref{tab:random} averages these values across 30 objectives;
random hit rates average the individual indicators
$\mathbf1\{R^{\rm rand}_{i,m}\geq0.999999\}$.
The 200 realizations comprise 100 for each base seed $20260922$ and
$20270001$, using
\texttt{SeedSequence([base\_seed,realization,B,i,r])}.
The paired guided-minus-random intervals use 20,000 objective
bootstrap samples and linear 2.5th/97.5th percentiles. The bootstrap RNG uses seed $20260922$ and is advanced in budget order. Random means remain fixed during
resampling; intervals are pointwise and conditional on those means.

For $N=7,8,10$, $K=\lfloor N/2\rfloor$, $B=N$, supports are
\[
\begin{split}
S_{\rm cycle}&=\{\{i,(i+1)\bmod N\}:0\leq i<N\},\\
S_{\rm star+edge}&=\{\{0,i\}:1\leq i<N\}\cup\{\{1,2\}\},\\
S_{\rm path+chord}&=\{\{i,i+1\}:0\leq i<N-1\}\cup\{\{0,2\}\}.
\end{split}
\]
Geometry codes $c=11,23,37$ give dense proposal index
$700000+10000N+100c+j$. The first 30 accepted proposals per cell
require a one-exchange separation of the two lowest classical
configurations, a non-endpoint production bottleneck, positive
maximal raw and normalized scores above $10^{-12}$, and discovery
of two distinct basins.

Projected-gradient relaxation is rounded to the $K$ largest
coordinates. Basin descent uses supported improving exchanges;
pair discovery starts from the rounded state and 20 uniformly
sampled feasible states, using seed $900000+1000N+100(j+1)$.
The coupling parameter is $\gamma=8$. Practical and spectral edges
are selected before refinement and held fixed; selector ties use
the first maximizing physical-pair entry. Refinement utilizes 641 equal points, 61 logarithmic endpoints, and probes around
the production minimum. The retrospective pair employs the pair of basins that have
the largest weight sum for both ground and first excited states at the refined
minimum. An accepted pair having maximum sensitivity at most $10^{-12}$ generates an
error but not an outcome-dependent redraw. Figure~\ref{fig:pair} is a comparison of
the 118 mismatching observations before and after replacing the pair.
\section*{Acknowledgements}
Aruna Gupta and S. R. Hassan acknowledge support from the QC BRIDGE Post-Doctoral Fellowship, funded by the Foundation for QC Innovation (FQCI), DST-NQM Hub for Quantum Computing at the Indian Institute of Science, Bengaluru.
\bibliography{references}
\end{document}